\begin{document}

\title{Coordination through ambiguous language}
\author{Michele Crescenzi\footnote{I wish to thank Hannu Vartiainen for invaluable support and supervision, and Hannu Salonen and Mark Voorneveld for helpful comments. Financial support from the OP Group Research Foundation is gratefully acknowledged. All errors are mine.}\\michele.crescenzi@helsinki.fi}
\affil{University of Helsinki and Helsinki GSE}

\date{April 2022}

\newtheorem{proposition}{Proposition}
\newtheorem{assumption}{Assumption}
\newtheorem{definition}{Definition}
\newtheorem{lemma}{Lemma}
\newtheorem{remark}{Remark}
\newtheorem{corollary}{Corollary}

\maketitle
\onehalfspacing

\begin{abstract}
We provide a syntactic construction of correlated equilibrium. For any finite game, we study how players coordinate their play on a signal by means of a public strategy whose instructions are expressed in some natural language. Language can be ambiguous in that different players may assign different truth values to the very same formula in the same state of the world. We model ambiguity using the player-dependent logic of \cite{halpern2015}. We show that, absent any ambiguity, self-enforcing coordination always induces a correlated equilibrium of the underlying game. When language ambiguity is allowed, self-enforcing coordination strategies induce subjective correlated equilibria.

\bigskip
{\sl JEL CLASSIFICATION}: C72, D82, D83
\bigskip

{\sl KEYWORDS}: Correlated equilibrium, semantic ambiguity, incomplete information
\end{abstract}
\newpage

\section{Introduction}
Correlated equilibrium is a solution concept that captures the impact of communication on strategic interaction. It does so without modeling explicitly the communication process in which players are involved. Differently put, correlated equilibrium ``express[es] an assumption that players have implicit communication opportunities, in addition to the strategic options explicitly described in the game model'' \cite[p. 245]{myerson1991book}. All \textit{implicit} communication opportunities are subsumed into canonical correlating devices which send private recommendations on how to play the game. But such devices, and the corresponding ``equilibria[,] may have no natural interpretation'' \cite[p. 47]{osborne-rubinstein1994}.

Our goal in this paper is to provide an alternative construction of correlated equilibrium which, we believe, has a more natural interpretation than the canonical one. The main idea behind our construction is that correlated play can be induced by the ambiguity of the natural language through which players communicate. By ambiguity, we mean interpretive uncertainty stemming from the fact that words or sentences can have a plurality of meanings. Let us consider an example. Suppose a central banker delivers the following public speech: ``If the GDP growth is sustained, then interest rates will be kept constant; otherwise they will be lowered''. Firms listen to the speech in order to decide on their investments, which depend on future interest rates. But what is the true content of the banker's statement? More specifically, how should the antecedent ``if the GDP growth is sustained'' be interpreted? Is there a threshold $x$ such that if the actual growth rate $y$ is greater than $x$ then it is really the case that ``the GDP growth is sustained''? One can argue that such a threshold $x$ does exist but, unless its value is explicitly stipulated in some contract or convention, it is not necessarily unique. A firm $i$ may think that the threshold is $x_i$, whereas firm $j$ believes that it is $x_j \neq x_i$. In addition, either firm does not know what threshold the other is using to classify the growth rate as sustained or not. In sum, even if the banker's speech is public, it may convey differential information to those who hear it. Differently put, uncertainty about the interpretation of an ambiguous statement acts like a correlating device that sends private messages to players, so inducing differential information.

The backbone of our analysis is the following process of communication and coordination. In a given simultaneous-move game, players receive information about the prevailing state of the world at the pre-play stage. Information is not payoff relevant, and it may come in different varieties. It can be a public speech, a private signal, a sunspot, etc. Players have the opportunity to condition their play on the information received in the pre-play stage. They do so by means of a coordination strategy, which is a public list of conditional statements on how to play the game. In the banker's example, a coordination strategy can contain the following statements: ``if interest rates will be kept constant, then only firm $i$ invests'' and ``if interest rates will be lowered, everybody invests''. Firms publicly agree to follow the action recommendations contained in the coordination strategy. But due to language ambiguity, firms do not know with certainty how others will interpret the strategy recommendations and, therefore, how they will react to the banker's speech. As in the standard case, this uncertainty can sustain equilibrium payoffs that are outside the convex hull of Nash equilibrium payoffs of the underlying game.

Contrary to the standard construction of correlated equilibrium, we separate messages from their meaning. We do so by modeling explicitly the language through which players communicate. A language is a set of well-defined formulas that describe every relevant aspect of the world. To capture ambiguity, we use the logic of \cite{halpern2015}. In it, truthfulness of formulas is defined relative to a player. Consequently, there can be states of the world where different players give different truth values to the very same formula. Players can disagree on a subset of formulas, namely those constructed as the conjunction or negation of primitive propositions, whereas the interpretation of probability formulas, i.e. beliefs, is the same for everybody. This means that every player is sophisticated enough to understand that others might be using different information partitions to form their beliefs.

Our main contribution is to provide a syntactic construction of correlated equilibrium. We consider two cases. In the first, we model the communication and coordination process illustrated in the central banker's example under the assumption that language is not ambiguous. We show that, for any finite game, any self-enforcing coordination strategy induces an objective correlated equilibrium distribution of the underlying game. In addition, any objective correlated equilibrium distribution of the underlying game can be induced by some coordination strategy in some unambiguous epistemic structure capturing players' interpretations of formulas in the language. In the second case, we allow language to be ambiguous. We obtain the same characterization as in the unambiguous case with the proviso that equilibrium distributions are now subjective correlated equilibria. We thus show that language ambiguity provides a justification for heterogeneous beliefs about strategic play.

We illustrate the model in Section \ref{sec:Model}. It consists of three parts: the syntax (how formulas are formed), the semantics (how meaning to formulas is assigned), and the coordination process. Results are presented in Section \ref{sec:Results}, where the two cases of common-interpretation and ambiguous epistemic structures are treated separately.

\subsection{Related literature}
Correlated equilibrium is introduced in \cite{aumann1974}. A reformulation of it in a decision-theoretic framework is provided in \cite{aumann1987}. Our analysis is related to the following strands of the literature.

First, a classical literature initiated by \cite{forges1988sunspots, forges1990universal} and \cite{barany1992} studies whether and how correlated equilibrium can be obtained in a decentralized manner, i.e. without the help of a mediator. In our analysis, a mediator is not strictly necessary in that the information that players receive in the pre-play stage can be interpreted as a sunspot \textit{\`{a} la} \cite{cass1983sunspots}. 

\cite{lehrer1996}, \cite{lehrer-sorin1997}, and \cite{di-tillio2004} study public mediated talk in which correlation is achieved trough a machine that receives private inputs and sends out public recommendations. If we assume that the information in our model is provided by a mediator, then communication is always one-way, i.e. from the mediator to the players. Under this interpretation, correlation is achieved through uncertainty about the messages sent by the mediator. Players do not need to exchange messages with each other, nor do they need to send reports to the mediator.

\cite{blume-board2013} examine strategic interaction under the assumption that players differ in their ``language competence'', i.e. their ability to use language. They model language explicitly. Other analyses aimed at modeling ambiguity (or vagueness) in natural language include \cite{lipman2009} and \cite{jaegher2003}. However, none of these papers use the syntactic approach as we do.

Our work is also related to the literature on epistemic foundations of solution concepts initiated by \cite{aumann-brandenburger1995}. The main goal of this literature is to find epistemic conditions that give rise to standard solution concepts. The approach is to model explicitly how players reason about the game and, in particular, how they reason about the rationality of their opponents. Recent contributions in which correlated equilibrium is studied include \cite{bach-perea-CE} and \cite{barelli2009}. Rather than rational play, the focus of our analysis is on how players reason about the realization of extraneous signals, and how this reasoning is affected by language ambiguity. All the contributions mentioned so far are carried out from a set-theoretic perspective. But another branch of the research program on epistemic foundations uses techniques from modal logic, as is done in \cite{lorini2010modal} and \cite{galeazzi-lorini2016}. An extensive overview is provided in \cite{de-bruin2010book}. To the best of our knowledge, our analysis would be the first to use modal logic to examine ambiguity about the interpretation of extraneous signals in games. As we already mentioned, we build on the logic of \cite{halpern2015}. In particular, the syntax (Section \ref{syntax}) and the semantics (Section \ref{semantics}) are theirs.
\section{Model}\label{sec:Model}
Let $G=\left(I, (A_i, u_i)_{i\in I}\right)$ be a finite game with simultaneous moves. The set of players is $I=\{1, \dots, n\}$. For every $i\in I$, $A_i$ is a non-empty, finite set of actions, and $u_i: \times_{j\in I} A_j \longrightarrow \mathbb{R}$ is the corresponding payoff function. As is standard, we define $A:= \times_{i\in I} A_i$ and, for any $i$, $A_{-i}:= \times_{j\neq i} A_j$.

Players coordinate their play in $G$ on the realizations of a payoff-irrelevant signal. In the pre-play stage, they agree on a list of instructions that tell them how to play the game conditional on signal observations. Players' reasoning about the game and the signals is captured by a formal language, which we are going to model explicitly. We describe the syntax in subsection \ref{syntax}, the semantics in subsection \ref{semantics}, and the coordination strategy in subsection \ref{coordination}.

\subsection{Syntax}\label{syntax}
The fundamental object is a non-empty, countable set $\Phi$ of primitive propositions, with typical elements $p,q, \dots$. Propositions in $\Phi$ describe non-epistemic aspects of the world.  A \textit{language} $\mathcal{L}(\Phi)$ is a set of well-formed formulas constructed from $\Phi$ through syntactic rules. Since no confusion should arise, from now on we omit the reference to $\Phi$ and write $\mathcal{L}$. The formulas contained in $\mathcal{L}$ determine the expressiveness of the language, i.e. the set of epistemic and non-epistemic aspects of the world that players can reason about. We construct $\mathcal{L}$ according to the following syntax:
\begin{itemize}
\item If $p \in \Phi$, then $p$ is a formula in $\mathcal{L}$;
\item \textit{Negation}: If $\varphi \in \mathcal{L}$, then $\lnot \varphi$ (``not $\varphi$'') is a formula in $\mathcal{L}$;
\item \textit{Conjunction}: If $\varphi, \psi \in \mathcal{L}$, then $\varphi \land \psi$ (``$\varphi$ and $\psi$'') is a formula in $\mathcal{L}$;
\item \textit{Probability formulas}: If $\varphi_1, \dots, \varphi_k \in \mathcal{L}$ and $b_1, \dots, b_k , c \in \mathbb{R}$, then, for every $i\in I$,
\begin{equation*}
b_1 \mathsf{pr}_i(\varphi_1) + \dots + b_k \mathsf{pr}_i(\varphi_k)  \geq c
\end{equation*}
is a formula in $\mathcal{L}$. The intended reading of $\mathsf{pr}_i(\varphi) \geq x$ is ``the probability that player $i$ ascribes to formula $\varphi$ is at least $x$'';
\item \textit{Modal operator} $\mathsf{CB}$: If $\varphi \in \mathcal{L}$, then $\mathsf{CB} \varphi$ (``it is commonly believed that $\varphi$'') is a formula in $\mathcal{L}$.
\end{itemize}

Probability formulas allow players to reason about beliefs and expected payoffs. We also want $\mathcal{L}$ to be sufficiently rich to describe how agents play the game $G$ and how they interpret the signals that they observe. Hence we assume that, for every $i\in I$, and for every $a_i \in A_i$, there is a primitive proposition $\mathsf{pl}_i a_i$ in $\Phi$. The intended reading of $\mathsf{pl}_i a_i$ is ``$i$ chooses $a_i$'' or, equivalently, ``$i$ plays $a_i$''. We assume that all these propositions describing choices are distinct elements, i.e. if $\mathsf{pl}_i a_i = \mathsf{pl}_j b_j$, then $i=j$ and $a_i=b_j$. Let $\Phi_G$ be the finite subset of $\Phi$ containing all such propositions about choices in $G$. In order to describe signals, let $\Phi^*$ be the set obtained by closing off $\Phi \backslash \Phi_G$ under negation and conjunction. Notice that formulas in $\Phi^*$ describe non-epistemic aspects of the world that are not payoff-relevant. We assume that there is a finite subset $\Sigma \subseteq \Phi^*$ of signals. Furthermore, if $\sigma \in \Sigma$, then $\left\lbrace \mathsf{rec}_i \,\sigma: i\in I\right\rbrace \subseteq \Phi$. The intended reading of $\mathsf{rec}_i \, \sigma$ is ``$i$ has received signal $\sigma$''.

We make use of the following \textbf{abbreviations}:
\begin{itemize}
\item \textit{Implication}: $\varphi \implies \psi$ (``$\varphi$ implies $\psi$'') is an abbreviation for $\lnot \left(\varphi \land \lnot \psi \right)$;
\item \textit{Belief operator}: $\mathsf{B}_i \varphi$ (``$i$ believes that $\varphi$'') is an abbreviation for
\begin{equation*}
\left(\mathsf{pr}_i(\varphi) \geq 1 \right) \land \left(-\mathsf{pr}_i (\varphi) \geq -1\right);
\end{equation*}
\item \textit{Mutual belief operator}: $\mathsf{EB}\varphi$ (``everybody believes that $\varphi$'') is an abbreviation for $\land_{i\in I} \mathsf{B}_i \varphi$. In addition, we define $\mathsf{EB}^m \varphi$ (``$\varphi$ is $m$th-order mutual belief'') recursively: $\mathsf{EB}^1 \varphi = \mathsf{EB} \varphi$, and $\mathsf{EB}^m \varphi = \mathsf{EB}\left(\mathsf{EB}^{m-1}\varphi\right)$ for $m \geq 2$;
\item $U_i(a_i)$ is the abbreviation for the probability formula
\begin{equation}\label{eq:Exp-Payoff}
\sum_{(a_1, \dots, a_{i-1}, a_{i+1}, \dots, a_n)\in A_{-i}} u_i(a_1, \dots, a_n)\, \mathsf{pr}_i\left(\mathsf{pl}_1 a_1 \land \cdots \land \mathsf{pl}_{i-1} a_{i-1} \land  \mathsf{pl}_{i+1}a_{i+1} \land \cdots \land \mathsf{pl}_n a_{n} \right).
\end{equation}
The intended reading of $U_i(a_i)$ is ``the expected payoff to $i$ from playing $a_i$''. In order for this intended reading to be meaningful, the probabilities that $i$ ascribes to formulas in \eqref{eq:Exp-Payoff} must be non-negative and sum up to one. Under the assumptions we make in Subsections \ref{semantics} and \ref{coordination}, it is always the case that these probabilities are well-defined, so making the reading of $U_i(a_i)$ as expected payoff unproblematic.
\item $a_i^* = a_i$ is the abbreviation for
\begin{equation*}\label{eq:U-max}
\land_{a_i'\in A_i} \left(U_i(a_i)\geq U_i(a_i')\right).
\end{equation*}
The intended reading of $a_i^* = a_i$ is ``$a_i$ is utility-maximizing''.
\item $\mathsf{rat}_i$ is the abbreviation for 
\begin{equation}\label{eq:Rat-i}
\land_{a_i \in A_i} \left(\mathsf{pl}_i a_i \implies (a_i^* = a_i) \right).
\end{equation}
The intended reading of $\mathsf{rat}_i$ is ``$i$ is rational''. Notice that \eqref{eq:Rat-i} is equivalent to saying that $i$ never chooses an action that is not utility-maximizing.
\end{itemize}

\subsection{Semantics}\label{semantics}
We need a semantic model to assign meaning to formulas in $\mathcal{L}$. That is, we need a consistent set of rules to determine whether any given formula is true or false. The semantic model we use is an \textit{epistemic probability structure} in which the interpretation of primitive propositions is player-dependent. Formally, an epistemic probability structure $M$ over $\Phi$ is a tuple $\left(\Omega, \mu, \{\pi_i\}_{i\in I}, \{H_i\}_{i\in I}\right)$, where:
\begin{itemize}
\item $\Omega$ is a non-empty, finite set of states or possible worlds;
\item $\mu$ is a common prior on (the power set of) $\Omega$;
\item $\pi_i: \Omega \times \Phi \longrightarrow \{0,1\}$ is agent $i$'s interpretation function. Agent $i$ deems proposition $p$ as true in state $\omega$ if $\pi_i(\omega, p) = 1$, and false otherwise; 
\item $H_i$ is agent $i$'s information partition over $\Omega$, with typical element $h_i$. We write $h_i(\omega)$ to indicate the cell containing the states that $i$ considers as possible when the true state is $\omega$. We make assumptions on how information partitions are determined by signals at the end of this subsection.
\end{itemize}

The finiteness of $\Omega$ is without loss of generality since we are confining ourselves to finite games. The player-dependent interpretation function $\pi_i$ captures language ambiguity: in a given world, different agents may assign different truth values to the very same primitive proposition. If $\pi_i=\pi_j$ for all $i,j\in I$, then we say that $M$ is a \textit{common-interpretation} structure. The latter corresponds to the standard case without ambiguity where the interpretation of every formula is player-independent. If $M$ is not a common-interpretation structure, then we call it \textit{ambiguous}.

Agents update beliefs through Bayes's rule. Given any event $E\subseteq \Omega$, agent $i$'s posterior belief about $E$ at $\omega$ is
\begin{equation*}
\mu\left(E \lvert h_i(\omega)\right) =
\frac{\mu(E \cap h_i(\omega))}{\mu(h_i(\omega))}.
\end{equation*}
To ensure that posteriors are always well-defined, we assume that $\mu(h_i(\omega)) >0$ for every state $\omega \in \Omega$ and every player $i \in I$.

Meaning to formulas in a structure $M$ is given inductively. The expression $(M, \omega, i) \vDash \varphi$ means that $\varphi$ holds at $\omega$ according to player $i$ in structure $M$. In addition, the intension of a formula $\varphi$ to player $i$ is $[[\varphi]]_i := \left\lbrace \omega \in \Omega: (M, \omega, i) \vDash \varphi\right\rbrace$, i.e. the set of states where $i$ deems $\varphi$ as true in structure $M$. Meaning to formulas is given as follows:
\begin{itemize}
\item If $p$ is a primitive proposition in $\Phi$, then $(M, \omega,i) \vDash p$ iff $\pi_i(\omega, p) = 1$;

\item $(M, \omega,i) \vDash \varphi \land \psi$ iff $(M, \omega,i) \vDash \varphi$ and $(M, \omega,i) \vDash \psi$;

\item $(M, \omega,i) \vDash \lnot \varphi$ iff $(M, \omega,i) \not\vDash \varphi$;

\item $(M, \omega,i) \vDash b_1 \mathsf{pr}_j(\varphi_1) + \dots + b_k \mathsf{pr}_j(\varphi_k)  \geq c$ iff 
\begin{equation}\label{eq:ProbFormulas}
b_1\, \mu\left([[\varphi_1]]_j \lvert h_j(\omega)\right) + \dots + b_k\, \mu\left([[\varphi_k]]_j \lvert h_j(\omega)\right) \geq c;
\end{equation}

\item $(M, \omega,i) \vDash \mathsf{B}_j \varphi$ iff $\mu\left([[\varphi]]_j \lvert h_j(\omega)\right) = 1$;

\item $(M, \omega,i) \vDash \mathsf{CB} \varphi$ iff $(M, \omega,i) \vDash \mathsf{EB}^k \varphi$ for $k=1,2,\dots$.
\end{itemize}

We emphasize that meaning to a formula is always given relative to a player. Due to language ambiguity, there can be states where different players assign different meaning to the very same formula. Formally, there can be states and formulas such that $(M,\omega, i)\vDash \varphi$ and $(M,\omega, j)\vDash \lnot\varphi$ for some $i$ and $j$. However, players are fully sophisticated in that they understand that others are using different information partitions to update beliefs\footnote{This is the \textit{innermost-scope semantics} of \cite{halpern2014}.}. Consequently, everybody agrees on the interpretation of probability formulas. As can be seen from \eqref{eq:ProbFormulas}, according to player $i$, agent $j$ assigns probability at least $c$ to a formula $\varphi$ if and only if the set of worlds where $\varphi$ holds according to $j$ has probability at least $c$ according to $j$. When the interpretation of a formula $\varphi$ is player-independent at a state $\omega$, we simplify notation and write $(M,\omega) \vDash \varphi$ instead of $(M,\omega,i) \vDash \varphi$ for all $i \in I$. In addition, we write $M \vDash \varphi$ when $(M,\omega,i)\vDash \varphi$ for every $\omega \in \Omega$ and every $i\in I$. In this case, we also say that $\varphi$ is \textit{valid} in $M$.

We now make two assumptions about the interpretation of signals and information partitions.
\begin{assumption}\label{ass:Sign-interpret}
For every $i,j \in I$,
\begin{itemize}
\item the collection
\begin{equation*}
\left\lbrace [[\mathsf{rec}_i \, \sigma]]_j: \sigma \in \Sigma \text{ and }[[\mathsf{rec}_i \, \sigma]]_j \neq \emptyset \right\rbrace
\end{equation*}
is a partition of $\Omega$;
\item for every $\sigma, \sigma' \in \Sigma$, if $[[\mathsf{rec}_i \, \sigma]]_j = [[\mathsf{rec}_i \, \sigma']]_j \neq \emptyset$, then $\sigma = \sigma'$.
\end{itemize}
\end{assumption}

The assumption says that, according to any player, everyone receives one, and only one, signal at every state. Because of ambiguity, the event of $i$'s receiving signal $\sigma$ can be interpreted differently by different agents. For instance, it could be the case that $(M,\omega,i) \vDash \mathsf{rec}_i \, \sigma \land \lnot\mathsf{rec}_i \, \sigma'$ and $(M,\omega,j) \vDash \mathsf{rec}_i \, \sigma' \land \lnot\mathsf{rec}_i \, \sigma$, where $\sigma \neq \sigma'$. For ease of reference, we write $\sigma_{i,\omega}$ to denote the necessarily unique signal that $i$ thinks she is observing at state $\omega$.

\begin{assumption}\label{ass:info-partitions}
For every $i\in I$ and every $\omega \in \Omega$,
\begin{equation*}
h_i(\omega)=[[\mathsf{rec}_i \, \sigma_{i,\omega}]]_i.
\end{equation*}
\end{assumption}

The assumption says that a player's information is determined by the signal she thinks she is observing. More specifically, the worlds that $i$ considers as possible at $\omega$ are all those where $i$ thinks that she is observing the same signal as in $\omega$. Notice that $\omega \in h_i(\omega)$ and, if $\omega' \in h_i(\omega)$, then $\sigma_{i,\omega} = \sigma_{i,\omega'}$.

The following example is meant to illustrate how one can use the main concepts introduced so far to capture ambiguity.
\paragraph{Example 1} There are two agents: $A$(nn) and $B$(ob). 
Suppose there is a primitive proposition $p \in \Phi$, whose intended reading is ``the air temperature is extreme''. According to Ann, temperatures are extreme if they are at most $x_A$ or at least $y_A$. According to Bob, temperatures are extreme if they are at most $x_B$ or at least $y_B$. Suppose $x_A < x_B < y_A < y_B$. The set of possible states of the world is represented in Table \ref{table:ex1_mod}.

\begin{table}[ht]
\centering
\begin{tabular}{c|ccc}
State & Temperature & Ann & Bob\\
\hline
$\omega_1$  & $x_A$ & $\mathsf{rec}_A\, p$  & $\mathsf{rec}_B \, p$\\
$\omega_2$  & $y_A$ & $\mathsf{rec}_A \,p$ & $\mathsf{rec}_B\, \lnot p$\\
$\omega_3$  & $x_B$ & $\mathsf{rec}_A \,\lnot p$ & $\mathsf{rec}_B\, p$\\
$\omega_4$  & $\frac{x_B + y_A}{2}$& $\mathsf{rec}_A \, \lnot p$ & $\mathsf{rec}_B \,\lnot p$\\
\end{tabular}
\caption{States of the world}
\label{table:ex1_mod}
\end{table}

Each state is a complete description of all the epistemic and non-epistemic aspects of the world. In state $\omega_1$, the actual temperature is $x_A$. Therefore, the proposition $p$ is deemed as true by both Ann and Bob. But in state $\omega_2$, $p$ is true according to Ann and false according to Bob. Their disagreement stems from language ambiguity. Since $p$ can be given a plurality of meanings, different agents may interpret it differently. We emphasize that, in ambiguous structures, there can be primitive propositions whose interpretation is not ambiguous at all. For instance, suppose that also the primitive proposition $q$ is in $\Phi$, where $q$ stands for ``the air temperature is $x_A$''. This proposition is unambiguous, and both Ann and Bob interpret it as true in state $\omega_1$ and false otherwise.

The true state of the world is observed through signals. Suppose that the set of signals is $\Sigma = \{p, \lnot p\}$. In addition, each player receives $\sigma \in \Sigma$ in a given state if and only if he or she deems $\sigma$ as true in that state. Each row of Table \ref{table:ex1_mod} indicates the signals received by either player in the corresponding state. We assume that the interpretation of formulas of the form $\mathsf{rec}_i \sigma$ is not ambiguous. For instance, we have $(M,\omega_2) \vDash \mathsf{rec}_A \,p \land \mathsf{rec}_B \lnot p$ even if $(M,\omega_2, A) \vDash p$ and $(M, \omega_2, B)\vDash \lnot p$. In words, Ann thinks at $\omega_2$ that Bob receives the signal ``the air temperature is not extreme'' while she thinks that the temperature is actually extreme. We use the formulas of the form $\mathsf{rec}_i \sigma$ to obtain the following information partitions:
\begin{align*}
H_A &= \left\lbrace \{\omega_1, \omega_2\}, \{\omega_3, \omega_4\} \right\rbrace\\
H_B &= \left\lbrace \{\omega_1, \omega_3\}, \{\omega_2, \omega_4\} \right\rbrace.
\end{align*}

Suppose that the common prior $\mu$ is uniform over $\Omega$. We now want to make a few remarks on how agents form beliefs. We start by noticing that 
\begin{align*}
(M, \omega_1, A) & \vDash \mathsf{B}_A \, p \; \land \; \mathsf{B}_B \, p\\
(M, \omega_1, B) & \vDash \mathsf{B}_A \, p \; \land \; \mathsf{B}_B \, p,
\end{align*}
or, in compact notation, $(M, \omega_1) \vDash \mathsf{EB} \, p$. That is, everybody believes that $p$ at $\omega_1$. This follows from the fact that
\begin{align*}
\mu \left([[p]]_A\vert h_A(\omega_1)\right) &= \mu \left(\{\omega_1, \omega_2\}\vert \{\omega_1, \omega_2\}\right) = 1\\
\mu \left([[p]]_B\vert h_B(\omega_1)\right) &= \mu \left(\{\omega_1, \omega_3\}\vert \{\omega_1, \omega_3\}\right) = 1.
\end{align*}
However, it holds that $(M, \omega_1)\vDash \lnot \mathsf{B}_A \mathsf{B}_B \, p$. In words, Ann does not believe that Bob believes that $p$. Indeed we have
\begin{equation*}
\mu \left([[\mathsf{B}_B \, p]]_A\vert h_A(\omega_1)\right) = \mu \left(\{\omega_1, \omega_3\}\vert \{\omega_1, \omega_2\}\right) = \frac{1}{2}.
\end{equation*}
Therefore, even if Ann and Bob receive the very same signal ``the air temperature is extreme'' in state $\omega_1$, it is not common belief between them that this is indeed the case. More specifically, the formula $p$ (and the formulas $\mathsf{rec}_A\, p$ and $\mathsf{rec}_B\, p$) is a first-order mutual belief at $\omega_1$, but it is not a second-order mutual belief. \textit{A fortiori}, $p$ is not commonly believed. This shows how ambiguity generates higher-order uncertainty in the interpretation of formulas. Things would be different if the epistemic structure had common interpretation. Suppose that both agents have the same interpretation function as in Ann's column in Table \ref{table:ex1_mod}. It is then immediate that, in state $\omega_1$, the formula $p$ (and the formulas $\mathsf{rec}_A\, p$ and $\mathsf{rec}_B\, p$) is not just first-order mutual belief but also common belief. Formally, $(M, \omega_1) \vDash \mathsf{CB}\, p$.

\subsection{Coordination}\label{coordination}
We now describe how agents make choices in $G$. We start by assuming the following.

\begin{assumption}\label{ass:measure}
In any structure $M$, for every $i\in I$ and every $a_i \in A_i$,
\begin{equation*}
M \vDash \mathsf{pl}_i a_i \implies \land_{a_i'\neq a_i}\left(\lnot \mathsf{pl}_i a_i'\right).
\end{equation*}
\end{assumption}

The assumption simply says that (it is commonly believed that) everyone does not play more than one action in each state.

Agents have the opportunity to coordinate their choices in $G$ through signals in $\Sigma$. More specifically, they can devise a \textit{coordination strategy} $C$ that tells them how to play $G$ depending on the realizations of signals in $\Sigma$.

\begin{definition}[Coordination strategy]\label{def:Coord}
A coordination strategy $C$ is a finite subset of $\mathcal{L}$ such that:
\begin{enumerate}
\item for each player $i\in I$ and each signal $\sigma \in \Sigma$, there is a unique action $a_i\in A_i$ such that the formula $\mathsf{rec}_i \, \sigma \implies \mathsf{pl}_i a_i$ belongs to $C$;
\item for every $\varphi \in C$, $M \vDash \varphi$.
\end{enumerate}
\end{definition}

A coordination strategy is a finite list of conditional propositions of the following form: ``if $i$ receives signal $\sigma$, then $i$ plays action $a_i$'', ``if $j$ receives signal $\sigma'$, then $j$ plays action $a_j$'', and so on. Notice that a strategy associates every signal with one, and only one, action for each player, but different signals may be associated with the same action recommendation. The strategy is public in that every formula contained in it is valid in $M$, hence it is common knowledge among everyone.

A coordination strategy is a set of instructions. Definition \ref{def:Coord} ensures that such a set is complete, i.e. it provides everyone with an action recommendation for every signal realization that can possibly be observed. But it says nothing about the rationality, or lack thereof, of these recommended actions. Therefore, we want to restrict our analysis to epistemic structures, and coordination strategies, that meet minimal rationality requirements.

\begin{assumption}[Individual rationality]\label{ass:IR}
In any structure $M$, for every $i\in I$, it holds that $(M,i)\vDash \mathsf{rat}_i$.
\end{assumption}

The assumption says that every $i$ chooses an action only if she deems it utility-maximizing. In other words, it is always true, according to player $i$, that $i$'s choices are utility-maximizing. As a consequence, $i$ always believes in her own rationality, and it is commonly believed that it is so. When the underlying epistemic structure has common interpretation, Assumption \ref{ass:IR} is tantamount to assuming common belief in rationality, i.e. common belief in the event that everyone is rational.

\begin{remark}\label{remarkCB}
Let $M$ be a structure satisfying Assumption \ref{ass:IR}. Then we have:
\begin{enumerate}
\item $M \vDash \mathsf{CB} \left(\land_{i\in I} \mathsf{B}_i \left(\mathsf{rat}_i\right)\right)$;
\item If $M$ is a common-interpretation structure, then $M \vDash \mathsf{CB} \left(\land_{i\in I} \mathsf{rat}_i\right)$.
\end{enumerate}
\end{remark}
\begin{proof}
By Assumption \ref{ass:IR} and the definition of the belief operator, we have that, for every $i\in I$, $(M,i)\vDash \mathsf{B}_i \left(\mathsf{rat}_i\right)$. Since the interpretation of probability formulas is player-independent, the latter is equivalent to $M\vDash \mathsf{B}_i \left(\mathsf{rat}_i\right)$ for every $i\in I$. Therefore, the formula $\land_{i\in I} \mathsf{B}_i \left(\mathsf{rat}_i\right)$ is valid in $M$, and it is always common belief that it is a true formula.

Now suppose that $M$ is a common-interpretation structure. Thus we have that, for every $i,j\in I$, $(M,i)\vDash \mathsf{rat}_i$ if and only if $(M,j)\vDash \mathsf{rat}_i$. But then it is immediate to get $M \vDash \land_{i\in I} \mathsf{rat}_i$, from which the result follows.
\end{proof}

In an individually rational structure, any coordination strategy $C$ is self-enforcing in that no one has the incentive to disobey its action recommendations. Formally, for every $\omega$ and every $i$, there exists a unique signal $\sigma$ such that
\begin{equation*}
(M, \omega, i) \vDash \mathsf{rec}_i \sigma \land \mathsf{pl}_i a_i \land (a_i^* = a_i),
\end{equation*}
where $a_i$ is the action prescribed by the formula $\mathsf{rec}_i \sigma \implies \mathsf{pl}_i a_i$ in $C$. To see why this is the case, observe the following. First, Assumptions \ref{ass:Sign-interpret} and \ref{ass:info-partitions} assure that there is a unique signal $\sigma = \sigma_{i,\omega}$ such that $(M, \omega, i) \vDash \mathsf{rec}_i \, \sigma$. Second, this signal is associated to a unique action by the coordination strategy $C$: there is a unique action $a_i$ with $\mathsf{rec}_i \, \sigma \implies \mathsf{pl}_i a_i$ in $C$ such that $M \vDash \mathsf{rec}_i \, \sigma \implies \mathsf{pl}_i a_i$. Third, the previous two points together imply $(M,\omega, i)\vDash \mathsf{pl}_i a_i$. Finally, by Assumption \ref{ass:IR}, we also get $(M,\omega, i)\vDash (a_i^* = a_i)$.

\section{Results}\label{sec:Results}
Our goal is to characterize the probability distributions over $A$ that are induced by a given coordination strategy. Formally, every coordination strategy $C$ induces a profile $\left(\gamma_i\right)_{i\in I}$ of probability distributions over $A$. For every $i$, we define
\begin{equation}\label{eq:OutDistr}
\gamma_i(a_1, \dots, a_n) := \mu \left([[\mathsf{pl}_1 a_1 \land \cdots \land  \mathsf{pl}_n a_n]]_i\right) = \mu \left(\{\omega: (M,\omega,i)\vDash \mathsf{pl}_1 a_1 \land \cdots \land  \mathsf{pl}_n a_n\}\right).
\end{equation}

Each $\gamma_i$ is a well-defined probability distribution. First, it is clear from \eqref{eq:OutDistr} that $\gamma_i (a)\geq 0$ for every $a\in A$. Second, by Assumptions \ref{ass:Sign-interpret} and \ref{ass:info-partitions}, and Definition \ref{def:Coord}, for every $\omega$ there exists an action profile $a\in A$ such that $(M,\omega,i)\vDash \mathsf{pl}_1 a_1 \land \cdots \land  \mathsf{pl}_n a_n$. Third, by Assumption \ref{ass:measure}, $a\neq a'$ implies that 
\begin{equation*}
[[\mathsf{pl}_1 a_1 \land \cdots \land  \mathsf{pl}_n a_n]]_i \cap [[\mathsf{pl}_1 a_1' \land \cdots \land  \mathsf{pl}_n a_n']]_i = \emptyset.
\end{equation*}
Therefore we have that $\sum_{a\in A} \gamma_i(a)=1$. From now on, when no confusion should arise, we abuse notation and write $\mathsf{pl} a$ instead of $\mathsf{pl}_1 a_1 \land \cdots \land  \mathsf{pl}_n a_n$, and $\mathsf{pl}_{-i} a_{-i}$ instead of $\mathsf{pl}_1 a_1 \land \cdots \land  \mathsf{pl}_{i-1} a_{i-1}\land \mathsf{pl}_{i+1} a_{i+1} \land \cdots \mathsf{pl}_n a_n$.

\subsection{Common-interpretation structures}
Let us consider first the case of common-interpretation structures. It is clear that, under common interpretation, $\gamma_i = \gamma_j$ for every $i,j \in I$. Thus we simplify things by dropping the subscript $i$. For any $a\in A$ we can write:
\begin{equation*}
\gamma(a_1, \dots, a_n) = \mu \left([[\mathsf{pl}_1 a_1 \land \cdots \land  \mathsf{pl}_n a_n]]\right).
\end{equation*}

Recall that a probability distribution $\gamma \in \Delta(A)$ is a \textbf{correlated equilibrium} of $G$ if, for every $i\in I$ and every $a_i\in A_i$,
\begin{equation*}
\sum_{a_{-i}\in A_{-i}} \left[ u_i(a_i, a_{-i}) - u_i(a_i', a_{-i})\right]\gamma (a_i, a_{-i})\geq  0\; \text{ for every } a_i'\in A_i.
\end{equation*}

We can now establish the first result.

\begin{proposition}\label{prop:From-CI-to-CE}
Let $M$ be a common-interpretation epistemic structure satisfying Assumptions \ref{ass:Sign-interpret}-\ref{ass:IR}. Then any coordination strategy induces a correlated equilibrium of $G$.
\end{proposition}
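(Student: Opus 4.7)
The plan is a standard Aumann-style aggregation argument, adapted to the syntactic setting. Fix a player $i$ and an action $a_i \in A_i$, and consider the event $E_i(a_i) := [[\mathsf{pl}_i a_i]]$ (the subscript $i$ on the intension can be dropped by common interpretation). To verify the correlated equilibrium inequality for $(i,a_i)$, I will show that the conditional distribution over $A_{-i}$ induced by $E_i(a_i)$ makes $a_i$ a best response, and then multiply by $\mu(E_i(a_i))$.

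The first key step is to show that $E_i(a_i)$ is a union of cells of $i$'s information partition. Take any $\omega \in E_i(a_i)$ and any $\omega' \in h_i(\omega)$. By Assumption \ref{ass:info-partitions}, $\omega'$ satisfies $\mathsf{rec}_i\,\sigma_{i,\omega}$. The coordination strategy $C$ associates $\sigma_{i,\omega}$ with a unique action $b_i$ via a valid formula $\mathsf{rec}_i\,\sigma_{i,\omega} \implies \mathsf{pl}_i b_i$; applied at $\omega$ this forces $b_i = a_i$, and applied at $\omega'$ it gives $(M,\omega') \vDash \mathsf{pl}_i a_i$. Hence $h_i(\omega) \subseteq E_i(a_i)$, so $E_i(a_i)$ is a disjoint union of cells $h_i^1, \dots, h_i^K$ of $H_i$.

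The second step is to extract the best-response inequality on each cell. By Assumption \ref{ass:IR}, $(M,\omega) \vDash \mathsf{rat}_i$ at every $\omega$, so at each $\omega \in h_i^k \subseteq E_i(a_i)$ we have $(M,\omega) \vDash a_i^* = a_i$. Unpacking the abbreviation $U_i(a_i) \geq U_i(a_i')$ via the semantics of probability formulas \eqref{eq:ProbFormulas} and using Assumption \ref{ass:info-partitions} gives, for every $a_i' \in A_i$,
\begin{equation*}
\sum_{a_{-i} \in A_{-i}} u_i(a_i,a_{-i})\,\mu\bigl([[\mathsf{pl}_{-i}a_{-i}]] \,\big|\, h_i^k\bigr) \;\geq\; \sum_{a_{-i} \in A_{-i}} u_i(a_i',a_{-i})\,\mu\bigl([[\mathsf{pl}_{-i}a_{-i}]] \,\big|\, h_i^k\bigr).
\end{equation*}
Multiply by $\mu(h_i^k) > 0$ (Bayesian updating was assumed well-defined) and sum over $k = 1,\dots,K$. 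The key identity is
\begin{equation*}
\sum_{k=1}^K \mu(h_i^k)\,\mu\bigl([[\mathsf{pl}_{-i}a_{-i}]] \,\big|\, h_i^k\bigr) \;=\; \mu\bigl([[\mathsf{pl}_{-i}a_{-i}]] \cap E_i(a_i)\bigr) \;=\; \gamma(a_i,a_{-i}),
\end{equation*}
using Assumption \ref{ass:measure} to equate $[[\mathsf{pl}_i a_i]] \cap [[\mathsf{pl}_{-i}a_{-i}]]$ with $[[\mathsf{pl}\,a]]$. Substituting yields exactly the correlated equilibrium inequality for $(i,a_i,a_i')$. If $\mu(E_i(a_i)) = 0$ the inequality is trivial, so no additional work is needed.

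I do not anticipate a real obstacle; the only thing to be careful about is the partitioning step, which is where the syntactic content (validity of the conditional formulas in $C$, together with Assumptions \ref{ass:Sign-interpret} and \ref{ass:info-partitions}) is being translated into the standard semantic object "measurability of the recommended action with respect to $H_i$." Once that is in place, the rest is bookkeeping: the law of total probability converts cell-by-cell best responses into a best response against $\gamma(\cdot \mid a_i)$.
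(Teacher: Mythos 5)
Your proof is correct and follows essentially the same route as the paper's: you show that $[[\mathsf{pl}_i a_i]]$ is a union of cells of $H_i$ using Assumptions \ref{ass:Sign-interpret}--\ref{ass:measure} and the validity of the formulas in $C$, invoke Assumption \ref{ass:IR} cell by cell, and aggregate via the law of total probability. The only (immaterial) difference is that you weight cells by $\mu(h_i^k)$ rather than by $\mu(h_i^k \mid [[\mathsf{pl}_i a_i]])$ as the paper does, and your partitioning step is spelled out in slightly more detail than the paper's.
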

\begin{proof}
The argument is standard. Suppose $\sum_{a_{-i}\in A_{-i}} \gamma(a_i, a_{-i}) >0$. Then we have:
\begin{equation}\label{eq:Prob0}
\sum_{a_{-i}\in A_{-i}} \left[ u_i(a_i, a_{-i}) - u_i(a_i', a_{-i})\right]\gamma (a_i, a_{-i}) \propto  \sum_{a_{-i}\in A_{-i}} \left[ u_i(a_i, a_{-i}) - u_i(a_i', a_{-i})\right]\gamma (a_{-i}\vert a_i),
\end{equation}
and the right hand side of \eqref{eq:Prob0} is equal to 
\begin{align}
\sum_{a_{-i}\in A_{-i}} \left[ u_i(a_i, a_{-i}) - u_i(a_i', a_{-i})\right]\mu\left([[\mathsf{pl}_{-i} a_{-i}]]\vert [[\mathsf{pl}_i  a_{i}]]\right). \label{eq:Prob1}
\end{align}

Now we argue that the event $[[\mathsf{pl}_i  a_{i}]]$ in \eqref{eq:Prob1} is the union of some cells of $H_i$. Assumptions \ref{ass:Sign-interpret} and \ref{ass:info-partitions} imply that, for every cell $h_i \in H_i$, there exists a unique signal $\sigma \in \Sigma$ such that $(M,\omega)\vDash \mathsf{rec}_i \, \sigma$ for every $\omega \in h_i$. Combining this with Definition \ref{def:Coord} and Assumption \ref{ass:measure}, we can conclude that, for every $h_i\in H_i$, there exists a unique action $a_i' \in A_i$ such that $(M, \omega)\vDash \mathsf{pl}_i \, a_i'$ for every $\omega \in h_i$.

Since $[[\mathsf{pl}_i  a_{i}]]$ can be written as the union of some cells of $H_i$, and by the law of total probability, we can write
\begin{align*}
\mu\left([[\mathsf{pl}_{-i} a_{-i}]]\vert [[\mathsf{pl}_i  a_{i}]]\right) = \sum_{\{h_i\in H_i: h_i \subseteq [[\mathsf{pl}_{i} a_i]]\}}\mu\left([[\mathsf{pl}_{-i} a_{-i}]]\vert h_i\right)\mu\left(h_i\vert [[\mathsf{pl}_i a_i]]\right).
\end{align*}
Substituting in \eqref{eq:Prob1} and rearranging yields
\begin{equation}\label{eq:Prob2}
\sum_{\{h_i\in H_i: h_i \subseteq [[\mathsf{pl}_{i} a_i]]\}}\;\; \sum_{a_{-i}\in A_{-i}} \left[ u_i(a_i, a_{-i}) - u_i(a_i', a_{-i})\right]\times \left[\mu\left([[\mathsf{pl}_{-i} a_{-i}]]\vert h_i\right)\mu\left(h_i\vert [[\mathsf{pl}_i a_i]]\right)\right].
\end{equation}
By Assumption \ref{ass:IR}, for every $\omega \in h_i \subseteq [[\mathsf{pl}_i a_i]]$, we have that $(M,\omega)\vDash \mathsf{pl}_i a_i \land (a_i^* = a_i)$. Therefore, \eqref{eq:Prob2} is non-negative, so proving the claim.
\end{proof}

The result says that, in common-interpretation structures, self-enforcing coordination strategies always lead to an objective correlated equilibrium of the underlying game. The result can be interpreted as a syntactic version of the classical analysis of \cite{aumann1987}. The role of common-interpretation can be described as follows. Even if different agents may receive different signals in the same state, everyone agrees on the profile of actions that is being played at that state. It is never the case that $i$ thinks that $j$ is playing $a_j$ whereas $k$ thinks that $j$ is playing $b_j$ in a given state. Differently put, agents can have different information but they all share the same model or view of the world, so ruling out any form of fundamental disagreement.

The next result is about the opposite direction, namely from correlated equilibria to epistemic structures.

\begin{proposition}\label{prop:CI}
Let $\gamma$ be a correlated equilibrium of $G$. Then there exist an individually rational, common-interpretation structure $M$, a set of signals $\Sigma$, and a coordination strategy $C$ that induce $\gamma$.
\end{proposition}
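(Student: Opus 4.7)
\textbf{Proof plan for Proposition~\ref{prop:CI}.} The plan is to run the converse of the Aumann construction in a syntactic dress: take the state space to be the support of $\gamma$, let each player's information at state $a$ be exactly her own coordinate $a_i$, and show that incentive compatibility of $\gamma$ is precisely what validates $\mathsf{rat}_i$ in the resulting structure.

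First I would build the structure. Set $\Omega := \{a \in A : \gamma(a) > 0\}$, take $\mu := \gamma\!\restriction\!\Omega$, and enlarge $\Phi$ if necessary so that it contains the required propositions about choices and, for each pair $(i, a_i)$ with $a_i \in A_i$, a primitive proposition $\sigma_{i,a_i}$; declare $\Sigma := \{\sigma_{i,a_i} : i \in I, a_i \in A_i\}$. Define the common interpretation function by $\pi(a, \mathsf{pl}_j b_j) = 1$ iff $a_j = b_j$ and $\pi(a, \mathsf{rec}_j \sigma_{k, b_k}) = 1$ iff $j = k$ and $a_j = b_j$. Then $[[\mathsf{rec}_i \sigma_{i,a_i}]]$ equals $\{a' \in \Omega : a'_i = a_i\}$ (and all other $[[\mathsf{rec}_i \sigma]]$ are empty), so Assumption~\ref{ass:Sign-interpret} holds trivially, and taking $H_i$ to be the partition generated by these non-empty sets satisfies Assumption~\ref{ass:info-partitions}. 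Assumption~\ref{ass:measure} is immediate from the interpretation of the $\mathsf{pl}_j$ propositions.

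Next I would write down $C$. For every $(i, a_i)$ put the formula $\mathsf{rec}_i \sigma_{i,a_i} \implies \mathsf{pl}_i a_i$ into $C$; for every $i$ and every $\sigma = \sigma_{j, a_j}$ with $j \neq i$, pick any fixed action $\bar a_i \in A_i$ and include $\mathsf{rec}_i \sigma_{j,a_j} \implies \mathsf{pl}_i \bar a_i$. The ``real'' formulas are valid by construction (at $a \in \Omega$, $\mathsf{rec}_i \sigma_{i, b_i}$ holds only when $b_i = a_i$, in which case $\mathsf{pl}_i a_i$ holds), while the ``dummy'' formulas are vacuously valid because their antecedents are never true in $\Omega$. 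Hence $C$ satisfies Definition~\ref{def:Coord}.

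The heart of the argument, and the step I expect to be the most delicate, is verifying Assumption~\ref{ass:IR}, i.e. $(M, a, i) \vDash \mathsf{rat}_i$ for every $a \in \Omega$ and every $i$. Fix $a \in \Omega$; by construction the only $a'_i$ with $(M, a) \vDash \mathsf{pl}_i a'_i$ is $a'_i = a_i$, so it suffices to check $(M, a) \vDash a_i^* = a_i$. Since $h_i(a) = \{a' \in \Omega : a'_i = a_i\}$ and $\mu(h_i(a)) = \sum_{a_{-i}} \gamma(a_i, a_{-i}) > 0$, the posterior $\mu([[\mathsf{pl}_{-i} a_{-i}]] \mid h_i(a))$ equals $\gamma(a_{-i} \mid a_i)$, and the correlated-equilibrium inequality for $\gamma$ at $(i, a_i)$ translates directly into $U_i(a_i) \geq U_i(a'_i)$ for every $a'_i$. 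Finally, the induced distribution coincides with $\gamma$ because $[[\mathsf{pl}_1 a_1 \land \cdots \land \mathsf{pl}_n a_n]] = \{a\} \cap \Omega$, so $\mu([[\mathsf{pl}\, a]]) = \gamma(a)$, completing the proof. The only real subtlety is being careful about action profiles with $\gamma(a_i) = 0$: by restricting $\Omega$ to the support of $\gamma$ these never arise as antecedents in the rationality check, so no additional off-path specification is needed beyond the dummy clauses already placed in $C$.
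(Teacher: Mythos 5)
Your proposal is correct and follows essentially the same construction as the paper's own proof: states are the support of $\gamma$ with prior equal to $\gamma$ itself, each player's signal reveals her own coordinate so that $h_i(\omega_a)=\{\omega_b: b_i=a_i\}$, and the correlated-equilibrium inequalities are exactly what validate Assumption~\ref{ass:IR}. The only (immaterial) difference is that you give each player a disjoint, action-indexed signal alphabet and pad $C$ with vacuously valid clauses for other players' signals, whereas the paper uses a single alphabet of size $\max_{i\in I}\vert A_i\vert$ with injections $s_i:A_i\longrightarrow \Sigma$ and pads only for signals outside the range of $s_i$.
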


\begin{proof}
Suppose $\gamma$ is a correlated equilibrium of $G$. Let $A^*\subseteq A$ be the support of $\gamma$. We define a common-interpretation structure $M = \left(\Omega, \mu, \pi, \{H_i\}_{i\in I}\right)$ by constructing one state $\omega_a$ for each action profile $a\in A^*$, so that $\Omega = \{\omega_a : a \in A^*\}$. The prior corresponds with the correlated equilibrium: for each state $\omega_a$, we set $\mu(\omega_a) = \gamma(a)$. To define the interpretation function $\pi$ and the information partitions $\{H_i\}_{i\in I}$, we first need to say more about formulas in the language $\mathcal{L}$.

Fix a set $\Sigma$ of signals such that $\vert \Sigma \vert = \max_{i \in I} \vert A_i\vert$. This allows us to choose, for each player $i\in I$, an injective function $s_i:A_i \longrightarrow \Sigma$ that we use to assign signals to players. Since each $s_i$ is injective, distinct actions correspond to different signals. The interpretation function is a function $\pi: \Omega \times \Phi \longrightarrow \{0,1\}$ such that, for all $\omega_a \in \Omega$, $i\in I$, $\sigma \in \Sigma$, and $b_i\in A_i$,

\begin{equation*}
\pi(\omega_a, \mathsf{rec}_i \sigma)=
\begin{cases}
1 & \text{ if } \sigma = s_i(a_i),\\
0 & \text{ otherwise} 
\end{cases}
\quad \text{and} \quad \pi(\omega_a, \mathsf{pl}_i b_i) = \begin{cases}
1 & \text{ if } b_i = a_i,\\
0 & \text{ otherwise.} 
\end{cases}
\end{equation*}

This implies that, for each state $\omega_a$ and each player $i$, the formula $\mathsf{rec}_i s_i(a_i) \land \mathsf{pl}_i a_i$ is deemed true at $\omega_a$. For each player $i$, the information partition $H_i$ is defined so that, for each state $\omega_a \in \Omega$, the cell $h_i(\omega_a)$ contains all the states where $i$ receives the same signal. By definition of $s_i$, we have
\begin{equation}\label{eq:inf-part-PROOF}
h_i(\omega_a) = \left\lbrace \omega_b \in \Omega : s_i(b_i)=s_i(a_i)\right\rbrace = \left\lbrace \omega_b \in \Omega : b_i = a_i\right\rbrace.
\end{equation}

One can easily verify that the structure $M$ constructed thus far satisfies Assumptions \ref{ass:Sign-interpret}-\ref{ass:measure}. To show that $M$ is individually rational, suppose that $(M,\omega_a) \vDash \mathsf{pl}_i a_i$. Player $i$'s expected payoff at $\omega_a$ from playing $a_i$ is
\begin{align*}
U_i(a_i) &= \sum_{a_{-i} \in A_{-i}} u_i(a_i, a_{-i}) \mu\left([[\mathsf{pl}_{-i} a_{-i}]] \vert h_i(\omega_a)\right)\\
&\propto \sum_{a_{-i} \in A_{-i}} u_i(a_i, a_{-i}) \mu\left([[\mathsf{pl}_{-i} a_{-i}]] \cap h_i(\omega_a)\right)\\
&=\sum_{a_{-i} \in A_{-i}} u_i(a_i, a_{-i}) \mu\left([[\mathsf{pl}_{-i} a_{-i}]] \cap [[\mathsf{pl}_i a_i]]\right)\\
&=\sum_{a_{-i} \in A_{-i}} u_i(a_i, a_{-i}) \mu\left([[\mathsf{pl}_1 a_1 \land \dots \land \mathsf{pl}_n a_n]]\right)\\
&= \sum_{a_{-i} \in A_{-i}} u_i(a_i, a_{-i}) \gamma\left(a_i, a_{-i} \right).
\end{align*}

Therefore, since $\gamma$ is a correlated equilibrium by assumption, we can conclude that $(M,\omega) \vDash a_i^* = a_i$.

Finally, we need to construct a coordination strategy $C$ that induces $\gamma$. For each $i\in I$ and each $\sigma \in \Sigma$, if $\sigma$ is in the range of $s_i$, then the formula $\mathsf{rec}_i \sigma \implies \mathsf{pl}_i a_i$, with $a_i = s_i^{-1}(\sigma)$, is in $C$. If $\sigma$ is not in the range of $s_i$, then pick an arbitrary $a_i'\in A_i$ and add the formula $\mathsf{rec}_i \sigma \implies \mathsf{pl}_i a'_i$ to $C$. One can easily verify that $C$ is indeed a coordination strategy as per Definition \ref{def:Coord}, and that it induces the correlated equilibrium $\gamma$.
\end{proof}

In the following example, we illustrate the construction that we have just used in proving Proposition \ref{prop:CI}.

\paragraph{Example 2} Consider the base game $G$ in Figure \ref{fig1}.

\begin{figure}[ht]
\centering
\begin{game}{3}{3}
             & $L$      & $C$  & $R$ \\
$T$       &$0,0$    &$2,1$ & $1,2$ \\
$M$      &$1,2$    &$0,0$ &$2,1$\\
$B$       &$2,1$    &$1,2$ &$0,0$
\end{game}
\caption{The base game. \label{fig1}}
\end{figure}

This game has a unique Nash equilibrium in which either player randomizes uniformly over her available strategies. Consider the correlated equilibrium $\gamma$ that puts weight $\frac{1}{6}$ on every action profile which gives strictly positive payoffs. We want to find an individually rational, common-interpretation structure that induces such an equilibrium. We start by noticing that the support of $\gamma$ is the following:
\begin{equation*}
A^* = \left\lbrace (T,C), (T,R), (M, L), (M,R), (B,L), (B, C) \right\rbrace.
\end{equation*}

Let the state space be $\Omega = \{\omega_a : a\in A^*\}$. The common prior over $\Omega$ is uniform. Fix a set of signals $\Sigma = \{\sigma, \sigma',\sigma''\}$. We assign signals to players through functions $s_i:A_i \longrightarrow \Sigma$, with $i=1,2$, such that
\begin{equation*}
(s_1(T), s_1(M), s_1(B)) = (s_2(L), s_2(C), s_2(R)) = (\sigma, \sigma', \sigma'').
\end{equation*}

The interpretation function is a function $\pi:\Omega \times \Phi \longrightarrow \{0,1\}$ which satisfies the truth assignments contained in the following table.

\begin{center}
\begin{tabular}{|l|| *{6}{c|}}
\hline
$\boldsymbol{\pi}$& $\omega_{TC}$ & $\omega_{TR}$ & $\omega_{ML}$ & $\omega_{MR}$ & $\omega_{BL}$ & $\omega_{BC}$\\ \hline \hline
$\mathsf{rec}_1 \sigma$ & 1 & 1 & 0 & 0 & 0 & 0\\ \hline
$\mathsf{rec}_1 \sigma'$ & 0 & 0 & 1 & 1 & 0 & 0\\ \hline
$\mathsf{rec}_1 \sigma''$ & 0 & 0 & 0 & 0 & 1 & 1\\ \hline
$\mathsf{rec}_2 \sigma$ & 0 & 0 & 1 & 0 & 1 & 0\\ \hline
$\mathsf{rec}_2 \sigma'$ & 1 & 0 & 0 & 0 & 0 & 1\\ \hline
$\mathsf{rec}_2 \sigma''$ & 0 & 1 & 0 & 1 & 0 & 0\\ \hline
$\mathsf{pl}_1 T$ & 1 & 1 & 0 & 0 & 0 & 0\\ \hline
$\mathsf{pl}_1 M$ & 0 & 0 & 1 & 1 & 0 & 0\\ \hline
$\mathsf{pl}_1 B$ & 0 & 0 & 0 & 0 & 1 & 1\\ \hline
$\mathsf{pl}_2 L$ & 0 & 0 & 1 & 0 & 1 & 0\\ \hline
$\mathsf{pl}_2 C$ & 1 & 0 & 0 & 0 & 0 & 1\\ \hline
$\mathsf{pl}_2 R$ & 0 & 1 & 0 & 1 & 0 & 0\\ \hline
\end{tabular}
\end{center}

By \eqref{eq:inf-part-PROOF}, information partitions are defined as follows:
\begin{align*}
H_1 &= \left\lbrace \{\omega_{TC}, \omega_{TR}\}, \{\omega_{ML}, \omega_{MR}\}, \{\omega_{BL}, \omega_{BC}\}\right\rbrace\\
H_2 &= \left\lbrace \{\omega_{TC}, \omega_{BC}\}, \{\omega_{TR}, \omega_{MR}\}, \{\omega_{ML}, \omega_{BL}\}\right\rbrace.
\end{align*}

It is straightforward to verify that the structure $M$ constructed so far is individually rational.

Finally, the coordination strategy $C$ that induces $\gamma$ can be defined as follows:
\begin{equation*}
C=\left\lbrace \mathsf{rec}_i s_i(a_i) \implies \mathsf{pl}_i a_i : i \in \{1,2\} \text{ and } a_i\in A_i \right\rbrace.
\end{equation*}

Notice that $C$ contains six formulas, and all of them are true in \textit{every} state of the world. This means that it is never the case that, say, player $1$ receives signal $\sigma'$ and plays action $T$.

\subsection{Ambiguous structures}
We now characterize the equilibrium distributions induced by epistemic structures that are possibly ambiguous. Recall that a profile of probability distributions $\left(\gamma_1, \dots, \gamma_n \right)$ over $A$ is a \textbf{subjective correlated equilibrium} of $G$ if, for every $i\in I$ and every $a_i\in A_i$,
\begin{equation*}
\sum_{a_{-i}\in A_{-i}} \left[ u_i(a_i, a_{-i}) - u_i(a_i', a_{-i})\right]\gamma_i (a_i, a_{-i})\geq  0\; \text{ for every } a_i'\in A_i.
\end{equation*}

Then we have the following.

\begin{proposition}
Let $M$ be an epistemic structure satisfying Assumptions \ref{ass:Sign-interpret}-\ref{ass:IR}. Then any coordination strategy induces a subjective correlated equilibrium of $G$.
\end{proposition}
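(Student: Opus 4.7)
The plan is to mimic the proof of Proposition \ref{prop:From-CI-to-CE} while replacing every intension by player $i$'s own intension $[[\cdot]]_i$. Fix $i\in I$ and $a_i\in A_i$; it suffices to treat the case $\mu([[\mathsf{pl}_i a_i]]_i) > 0$, as otherwise the inequality is trivial. Since the marginal of $\gamma_i$ on $a_i$ equals $\mu([[\mathsf{pl}_i a_i]]_i)$, the quantity $\sum_{a_{-i}}[u_i(a_i,a_{-i})-u_i(a_i',a_{-i})]\,\gamma_i(a_i,a_{-i})$ is proportional, with positive constant $\mu([[\mathsf{pl}_i a_i]]_i)$, to
\[
\sum_{a_{-i}\in A_{-i}}[u_i(a_i,a_{-i})-u_i(a_i',a_{-i})]\,\mu\bigl([[\mathsf{pl}_{-i}a_{-i}]]_i \bigm| [[\mathsf{pl}_i a_i]]_i\bigr),
\]
so it is enough to show that this expression is non-negative.

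First I would verify that $[[\mathsf{pl}_i a_i]]_i$ is a union of cells of $H_i$. By Assumptions \ref{ass:Sign-interpret} and \ref{ass:info-partitions}, each $h_i\in H_i$ equals $[[\mathsf{rec}_i\sigma]]_i$ for a unique $\sigma\in\Sigma$; Definition \ref{def:Coord} then supplies a unique $b_i\in A_i$ with $\mathsf{rec}_i\sigma \implies \mathsf{pl}_i b_i$ valid in $M$. Specializing validity to player $i$ yields $h_i\subseteq[[\mathsf{pl}_i b_i]]_i$, and specializing Assumption \ref{ass:measure} to $i$ gives $[[\mathsf{pl}_i b_i]]_i\cap[[\mathsf{pl}_i a_i]]_i=\emptyset$ whenever $b_i\neq a_i$. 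Hence $[[\mathsf{pl}_i a_i]]_i$ is exactly the union of those cells of $H_i$ whose associated action (under the coordination strategy $C$) is $a_i$.

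Next I would apply the law of total probability over those cells, substitute into the display above, and rearrange so that the outer sum runs over the cells $h_i\subseteq [[\mathsf{pl}_i a_i]]_i$ and the inner sum runs over $a_{-i}$. The inner sum at a cell $h_i$ is precisely $U_i(a_i)-U_i(a_i')$ evaluated at any $\omega\in h_i$, by the definition of $U_i$ in \eqref{eq:Exp-Payoff} together with Assumption \ref{ass:info-partitions}. For $\omega\in h_i\subseteq[[\mathsf{pl}_i a_i]]_i$ we have $(M,\omega,i)\vDash\mathsf{pl}_i a_i$, hence by Assumption \ref{ass:IR} also $(M,\omega,i)\vDash a_i^*=a_i$, i.e.\ $U_i(a_i)\geq U_i(a_i')$. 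Each inner sum is therefore non-negative, the outer weights $\mu(h_i\mid[[\mathsf{pl}_i a_i]]_i)$ are non-negative, and the claim follows.

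The principal subtlety will be keeping the player-dependent intensions aligned throughout: the marginal of $\gamma_i$, the conditional probability in the key display, the expected-payoff formula $U_i(a_i)$, and the partition cells $h_i$ must all be read with respect to the \emph{same} interpretation $[[\cdot]]_i$ and the same partition $H_i$, so that the substitution collapses to an expression to which Assumption \ref{ass:IR} applies directly. Unlike the common-interpretation case, $j$'s intension $[[\mathsf{pl}_j a_j]]_j$ can differ from $i$'s intension $[[\mathsf{pl}_j a_j]]_i$, so generically $\gamma_i\neq\gamma_j$; this is precisely why the induced profile qualifies only as a subjective, rather than objective, correlated equilibrium.
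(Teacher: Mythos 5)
Your proposal is correct and follows essentially the same route as the paper, which simply carries over the argument of Proposition \ref{prop:From-CI-to-CE} with every intension and conditional probability read relative to player $i$'s own interpretation $[[\cdot]]_i$ and partition $H_i$. Your write-up merely spells out the details (the disjointness from Assumption \ref{ass:measure}, the cell decomposition of $[[\mathsf{pl}_i a_i]]_i$, and the application of Assumption \ref{ass:IR}) that the paper leaves implicit.
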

\begin{proof}
The argument is the same as in the proof of Proposition \ref{prop:From-CI-to-CE} with the proviso that, for every $i\in I$, one uses the following decomposition of conditional probabilities:
\begin{align*}
\gamma_i (a_{-i}\vert a_i) &= \mu \left([[\mathsf{pl}_{-i} a_{-i}]]_i \vert [[\mathsf{pl}_i a_i]]_i\right)\\[15pt]
&= \sum_{\{h_i\in H_i: h_i \subseteq [[\mathsf{pl}_i a_i]]_i\}}\mu\left([[\mathsf{pl}_{-i} a_{-i}]]_i\vert h_i\right)\mu\left(h_i\vert [[\mathsf{pl}_i a_i]]_i\right).
\end{align*}
\end{proof}

The result can be interpreted as follows. Even if players agree on a coordination strategy, and even if they share a common prior, language ambiguity may cause them to ascribe different probabilities to the same event, so leading to inconsistent beliefs. Players may disagree on what action profile is being played in a given state. 
Contrary to common-interpretation structures, it may well be the case that $i$ thinks that $j$ is playing $a_j$ whereas $k$ thinks that $j$ is playing $b_j$ in a given state. Differently put, agents may have different views of the world stemming from a fundamental disagreement about the interpretation of (some) primitive propositions. We illustrate this point in the next example, where we describe an ambiguous structure whose induced equilibrium distributions are a subjective correlated equilibrium but not an objective correlated equilibrium. 

\paragraph{Example 3} Consider the elementary coordination game in Figure \ref{figX}.

\begin{figure}[ht]
\centering
\begin{game}{2}{2}
         & $L$    & $R$ \\
$U$      & $1,1$  & $0,0$ \\
$D$      & $0,0$  & $1,1$
\end{game}
\caption{The base game. \label{figX}}
\end{figure}

Suppose that the epistemic structure is the same as in Example 4.3 of \cite{halpern2015}. The state space is $\Omega = \{\omega, \omega'\}$, and the common prior is uniform. The set of signals is $\Sigma = \{\sigma, \sigma'\}$. Players disagree on the interpretation of signals. Values of their interpretation functions are reported in the following tables.

\begin{center}
\begin{tabular}{|l|| *{2}{c|}}
\hline
$\boldsymbol{\pi_1}$& $\omega$ & $\omega'$\\ \hline \hline
$\mathsf{rec}_1 \sigma$ & 1 & 0\\ \hline
$\mathsf{rec}_1 \sigma'$ & 0 & 1\\ \hline
$\mathsf{rec}_2 \sigma$ & 1 & 0\\ \hline
$\mathsf{rec}_2 \sigma'$ & 0 & 1\\ \hline
$\mathsf{pl}_1 U$ & 1 & 0\\ \hline
$\mathsf{pl}_1 D$ & 0 & 1\\ \hline
$\mathsf{pl}_2 L$ & 1 & 0\\ \hline
$\mathsf{pl}_2 R$ & 0 & 1\\ \hline
\end{tabular}
\qquad \qquad
\begin{tabular}{|l|| *{2}{c|}}
\hline
$\boldsymbol{\pi_2}$& $\omega$ & $\omega'$\\ \hline \hline
$\mathsf{rec}_1 \sigma$ & 1 & 1\\ \hline
$\mathsf{rec}_1 \sigma'$ & 0 & 0\\ \hline
$\mathsf{rec}_2 \sigma$ & 1 & 1\\ \hline
$\mathsf{rec}_2 \sigma'$ & 0 & 0\\ \hline
$\mathsf{pl}_1 U$ & 1 & 1\\ \hline
$\mathsf{pl}_1 D$ & 0 & 0\\ \hline
$\mathsf{pl}_2 L$ & 1 & 1\\ \hline
$\mathsf{pl}_2 R$ & 0 & 0\\ \hline
\end{tabular}
\end{center}

Thus we have:
\begin{align*}
[[\mathsf{rec}_1 \sigma]]_1 = \{\omega\} &&  [[\mathsf{rec}_2 \sigma]]_1 = \{\omega\}\\
[[\mathsf{rec}_1 \sigma']]_1 = \{\omega'\} && [[\mathsf{rec}_2 \sigma']]_1 = \{\omega'\}
\end{align*}
for player $1$ and 
\begin{align*}
[[\mathsf{rec}_1 \sigma]]_2 = \{\omega, \omega'\} &&  [[\mathsf{rec}_2 \sigma]]_2 = \{\omega, \omega'\}\\
[[\mathsf{rec}_1 \sigma']]_2 = \emptyset && [[\mathsf{rec}_2 \sigma']]_2 = \emptyset
\end{align*}
for player $2$. In words, each agent thinks that the other always receives the same signal as hers. Information partitions are given by:
\begin{align*}
H_1 &= \left\lbrace \{\omega\}, \{\omega'\}\right\rbrace\\
H_2 &= \left\lbrace \{\omega, \omega'\}\right\rbrace.
\end{align*}

Now suppose that the coordination strategy $C$ contains the following four formulas:
\begin{align*}
\mathsf{rec}_1 \sigma &\implies \mathsf{pl}_1\; U\\
\mathsf{rec}_2 \sigma &\implies \mathsf{pl}_2\; L\\
\mathsf{rec}_1 \sigma' &\implies \mathsf{pl}_1\; D\\
\mathsf{rec}_2 \sigma' &\implies \mathsf{pl}_2 \; R.
\end{align*}

One can verify that $C$ is self-enforcing. The induced subjective probability distributions over $A$ are $\gamma_1(U,L)= \gamma_1(D,R)= \frac{1}{2}$ and $\gamma_2(U,L)= 1$. Finally, we observe that $(M, \omega',2)\vDash \mathsf{rec}_1 \sigma \land \mathsf{pl}_1 U \land \lnot (a_1^* = U)$. In words, $2$ thinks that $1$ is choosing an action that is not utility-maximizing. The reason is that, as we pointed out in Remark \ref{remarkCB}, individual rationality in ambiguous structures does not entail common belief in rationality.

The next result is about the opposite direction, namely from subjective correlated equilibria to epistemic structures that induce them.

\begin{proposition}
Let $\left(\gamma_1, \dots, \gamma_n\right)$ be a subjective correlated equilibrium of $G$. Then there exist an individually rational epistemic structure $M$, a set of signals $\Sigma$, and a coordination strategy $C$ that induce $\left(\gamma_1, \dots, \gamma_n\right)$.
\end{proposition}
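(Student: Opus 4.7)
The plan is to mirror the proof of Proposition~\ref{prop:CI}, but replace the single ``objective'' state space $\{\omega_a : a \in A^*\}$ by a \emph{product} state space on which each player reads a different coordinate, so that the distinct subjective priors $\gamma_1, \ldots, \gamma_n$ can coexist under one common prior. Let $A_j^{*} = \mathrm{supp}(\gamma_j) \subseteq A$, set $\Omega = \prod_{j \in I} A_j^{*}$, and take the common prior $\mu$ to be the product measure $\bigotimes_{j \in I} \gamma_j$. I write a typical state as $\omega = (b^1, \ldots, b^n)$ with $b^j = (b^j_1, \ldots, b^j_n) \in A_j^{*}$; the coordinate $b^j$ encodes ``player $j$'s view of the action profile being played.'' Each player reads only her own coordinate: $\pi_j(\omega, \mathsf{pl}_k c_k) = 1$ iff $c_k = b^j_k$. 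Fix $\Sigma$ with $|\Sigma| = \max_i |A_i|$ and injections $s_i : A_i \to \Sigma$ as in Proposition~\ref{prop:CI}, and set $\pi_j(\omega, \mathsf{rec}_i \sigma) = 1$ iff $\sigma = s_i(b^j_i)$. Assumption~\ref{ass:info-partitions} then forces $h_j(\omega) = \{\omega' : (b')^j_j = b^j_j\}$. The coordination strategy $C$ contains, for each $i$ and each $\sigma \in s_i(A_i)$, the formula $\mathsf{rec}_i \sigma \implies \mathsf{pl}_i s_i^{-1}(\sigma)$, plus an arbitrary recommendation for each signal outside the range of $s_i$.

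Assumptions~\ref{ass:Sign-interpret}--\ref{ass:measure} and the validity $M \vDash \varphi$ for every $\varphi \in C$ follow directly from the injectivity of the $s_i$ and the coordinate-wise definition of $\pi_j$, just as in Proposition~\ref{prop:CI}. The crucial step is individual rationality. By the product structure, at a state $\omega$ with $b^i_i = a_i$ one has
\begin{equation*}
\mu\bigl([[\mathsf{pl}_{-i} a_{-i}]]_i \,\bigm|\, h_i(\omega)\bigr) \;=\; \frac{\gamma_i(a_i, a_{-i})}{\gamma_i(a_i)} \;=\; \gamma_i(a_{-i} \mid a_i),
\end{equation*}
so player $i$'s expected payoff from any $a'_i$ is $U_i(a'_i) = \sum_{a_{-i}} u_i(a'_i, a_{-i}) \, \gamma_i(a_{-i} \mid a_i)$. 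The subjective correlated equilibrium condition applied to the pair $(i, a_i)$ with $a_i = b^i_i$ then yields $U_i(a_i) \geq U_i(a'_i)$ for every $a'_i$, hence $(M, \omega, i) \vDash a_i^{*} = a_i$. Finally, marginalizing the product prior over the other coordinates gives $\mu([[\mathsf{pl}\, a]]_i) = \gamma_i(a)$, so $C$ induces $(\gamma_1, \ldots, \gamma_n)$.

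The main obstacle is conceptual rather than computational. Under common interpretation, a state unambiguously codes one action profile, so a single ``$\omega_a$ per $a \in A^{*}$'' construction suffices; under ambiguity there is no such objective profile, and one must allow different players to parse the same $\omega$ as different action profiles. The product state space and coordinate-wise interpretation implement this decoupling, while the product prior ensures that each $\gamma_j$ emerges as the $j$-th marginal. With that structural idea in place, every verification reduces to the coordinate-wise analogue of the corresponding check in Proposition~\ref{prop:CI}.
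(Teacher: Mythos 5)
Your construction is exactly the one the paper uses: a product state space $\Omega = \prod_j \mathrm{supp}(\gamma_j)$ with the product common prior, coordinate-wise interpretation functions so that player $i$ reads the $i$-th action profile, the same signal assignment via injections $s_i$, and the same individual-rationality and marginalization computations. The proposal is correct and takes essentially the same approach as the paper.
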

\begin{proof}
The argument follows the same logic as in the proof of Proposition \ref{prop:CI}. Suppose $\left(\gamma_1, \dots, \gamma_n\right)$ is a subjective correlated equilibrium of $G$. For every $i\in I$, let $A_*^i \subseteq A$ be the support of $\gamma_i$. Define $A_* := \times_{i\in I} A_*^i$. Elements $a$ in $A_*$ are profiles of action profiles, and we use the following notation:
\begin{equation*}
a = \left(a^1, \dots, a^n\right) =\left((a_1^1, \dots, a_n^1), \dots, (a_1^n,\dots, a_n^n)\right).
\end{equation*}

We define an epistemic probability structure $M=\left(\Omega, \mu, \{\pi_i\}_{i\in I}, \{H_i\}_{i\in I}\right)$ as follows. We construct one state $\omega_a$ for each element $a\in A_*$, so that $\Omega = \{\omega_a : a\in A_*\}$. The common prior $\mu$ over $\Omega$ is constructed as a product measure: for every $\omega_a$, we let $\mu(\omega_a) = \prod_{i = 1}^{n}\gamma_i(a^i)$. 

Now fix a set $\Sigma$ of signals such that $\vert \Sigma \vert = \max_{i \in I} \vert A_i\vert$. For each $i\in I$, we can choose an injective function $s_i: A_i \longrightarrow \Sigma$ that we use to assign signals to players. Since each $s_i$ is injective, distinct actions correspond to different signals. For each $i\in I$, the interpretation function is a function $\pi_i: \Omega \times \Phi \longrightarrow \{0,1\}$ such that, for all $\omega_a\in \Omega$, $j\in I$, $\sigma \in \Sigma$, and $b_j\in A_j$,
\begin{equation}\label{eq:interpret-funct}
\pi_i(\omega_a, \mathsf{rec}_j \sigma)=
\begin{cases}
1 & \text{ if } \sigma = s_j(a_j^i),\\
0 & \text{ otherwise} 
\end{cases}
\quad \text{and} \quad \pi_i(\omega_a, \mathsf{pl}_j b_j) = \begin{cases}
1 & \text{ if } b_j = a_j^i,\\
0 & \text{ otherwise.} 
\end{cases}
\end{equation}

The above definition implies that, for each state $\omega_a$, and each player $i$, the formulas $\left\lbrace \mathsf{rec}_j s_j(a_j^i) \land \mathsf{pl}_j a_j^i : j\in I\right\rbrace$ are deemed as true at $\omega_a$ by player $i$. For each $i\in I$, the information partition $H_i$ is defined so that, for each state $\omega_a\in \Omega$, the cell $h_i(\omega_a)$ contains all the states where, according to $i$, $i$ receives the same signal. By definition of $s_i$, we have
\begin{equation}\label{eq:inf-part-PROOF2}
h_i(\omega_a) = \left\lbrace \omega_b \in \Omega : s_i(b_i^i)=s_i(a_i^i)\right\rbrace = \left\lbrace \omega_b \in \Omega : b_i^i = a_i^i\right\rbrace.
\end{equation}

One can easily verify that the structure $M$ constructed thus far satisfies Assumptions \ref{ass:Sign-interpret}-\ref{ass:measure}. To show that $M$ is individually rational, suppose that $(M,\omega_a,i) \vDash \mathsf{pl}_i a_i$. By \eqref{eq:interpret-funct}, $a_i=a_i^i$. Player $i$'s expected payoff at $\omega_a$ from playing $a_i$ is
\begin{align}
U_i(a_i) &= \sum_{a_{-i} \in A_{-i}} u_i(a_i, a_{-i}) \mu\left([[\mathsf{pl}_{-i} a_{-i}]]_i \vert h_i(\omega_a)\right) \nonumber \\
&\propto \sum_{a_{-i} \in A_{-i}} u_i(a_i, a_{-i}) \mu\left([[\mathsf{pl}_{-i} a_{-i}]]_i \cap h_i(\omega_a)\right)\nonumber\\
&=\sum_{a_{-i} \in A_{-i}} u_i(a_i, a_{-i}) \mu\left([[\mathsf{pl}_{-i} a_{-i}]]_i \cap [[\mathsf{pl}_i a_i]]_i\right) \nonumber\\
&=\sum_{a_{-i} \in A_{-i}} u_i(a_i, a_{-i}) \mu\left([[\mathsf{pl}_1 a_{1} \land \dots \land \mathsf{pl}_n a_n]]_i\right) \nonumber\\
&=\sum_{a_{-i} \in A_{-i}} u_i(a_i, a_{-i}) \mu\left(\{\omega_b\in \Omega: b^{i} = (a_i, a_{-i})\}\right).\label{eq:ProbProofAmb}
\end{align}

By definition of the common prior $\mu$, we have that $\mu\left(\{\omega_b\in \Omega: b^{i} = (a_i, a_{-i})\}\right)$ is equal to
\begin{equation}\label{eq:Prior-decomp}
\gamma_i(a_i, a_{-i})\left[\sum_{(b^1,\dots, b^{i-1}, b^{i+1}, \dots, b^n )\in \times_{j\neq i} A_*^j} \gamma_1(b^1)\times\cdots \times \gamma_{i-1}(b^{i-1})\times\gamma_{i+1}(b^{i+1})\times \cdots \times \gamma_n(b^n)\right],
\end{equation}
which simplifies to $\gamma_i(a_i, a_{-i})$. Substituting in \eqref{eq:ProbProofAmb}, we obtain
\begin{equation*}
U_i(a_i) \propto \sum_{a_{-i} \in A_{-i}} u_i(a_i, a_{-i}) \gamma_i\left(a_i, a_{-i} \right).
\end{equation*}
Therefore, since $\gamma_i$ is part of a subjective correlated equilibrium by assumption, we can conclude that $(M,\omega_a,i) \vDash \mathsf{pl}_i a_i \land (a_i^* = a_i)$.

It remains to construct a coordination strategy $C$ that induces $\left(\gamma_1, \dots, \gamma_n\right)$. For each $i\in I$ and each $\sigma \in \Sigma$, if $\sigma$ is in the range of the injective function $s_i$, then the formula $\mathsf{rec}_i \sigma \implies \mathsf{pl}_i a_i$, with $a_i = s_i^{-1}(\sigma)$, is in $C$. If $\sigma$ is not in the range of $s_i$, then pick an arbitrary $a_i'\in A_i$ and add the formula $\mathsf{rec}_i \sigma \implies \mathsf{pl}_i a_i'$ to $C$. One can easily verify that $C$ is indeed a coordination strategy as per Definition \ref{def:Coord}. Finally, to show that it induces the subjective correlated equilibrium $\left(\gamma_1, \dots, \gamma_n\right)$, by using \eqref{eq:Prior-decomp}, we have that, for every $i\in I$ and every $(a_1,\dots, a_n)\in A$,
\begin{equation*}
\mu\left([[\mathsf{pl}_{1}a_1\land \dots \land  \mathsf{pl}_n a_{n}]]_i\right) =  \mu\left(\{\omega_b\in \Omega: b^{i} = (a_1, \dots, a_n)\}\right) = \gamma_i(a_1, \dots, a_n),
\end{equation*}
so ending the proof.
\end{proof}

\section{Discussion}
We use the word ambiguity as it is done in linguistics, where it expresses the fact that the map from sentences to meanings is multi-valued. Our analysis has nothing to do with ambiguity in the decision-theoretic sense of not knowing the ``true'' probability distribution of a certain event. We take ambiguity as a given and do not model why different players can assign different truth values to the very same formula. Our interpretation is that ambiguity is a structural property of natural language. Differently put, the map from sentences to meanings induced by any natural language is not commonly known. The gist of our analysis is that players can agree on sentences and, at the same time, disagree on meanings. When strategic interaction is conditioned on sentences, it is the uncertainty about their meanings that acts as a correlating device.

Introduced by \cite{aumann1974}, the standard model for correlated equilibrium is set in an event-based epistemic framework. That is, players reason about events which are represented as subsets of a given state space. The language through which agents describe events is not modeled explicitly. Adopting a syntactic approach, our analysis consists in enriching the standard model for correlated equilibrium with a formal language. As a consequence, agents' reasoning about the world and, in particular, the game they are going to play is now expressed through formulas; the state space is a representation of how agents assign meaning to formulas. The standard model can be seen as a reduced form model of the syntactic approach we use. A comparison between the event-based and the syntactic epistemic frameworks, but without game theoretic applications and without language ambiguity, can be found in \cite{halpern2003reasoning}.

Players are assumed to be fully rational. Even if they have different interpretation functions, they fully understand that the interpretation of probability formulas, hence beliefs, is not the same for everyone. In addition, their information is always partitional. We refer to \cite{brandenburger1992GenCorrelated} for a construction of correlated equilibrium with boundedly rational players. In their model, players make systematic mistakes in processing information, so leading to non-partitional information functions. They show that information processing errors are equivalent to introduce ``subjectivity'' in beliefs. What we show in this paper is that, without any information processing error, ambiguity in natural language provides a justification for heterogeneous beliefs.

At first blush, it might be surprising that players having a common prior over a fixed state space end up having different subjective beliefs about their play in the underlying game. The reason why this is the case can be explained as follows. The state space $\Omega$ can be seen as a collection of $n$ models about the world. When all of these models are exactly the same, then their ``projection'' over $A$ is obviously the same for everybody. But when the subjective models differ, different players might have different projections over $A$, because the event $(a_1,\dots, a_n)$ is not the same for everybody (i.e. the set of states where it holds is not the same for everyone). Differently put, language ambiguity induces subjectivity in how players reason about their choices in $G$.

\section{Conclusion}
We have examined how players can coordinate their choices when the language through which they communicate is possibly ambiguous. The gist of our results is that, when players publicly agree to condition their play on a set of sentences, the meaning of these sentences is not necessarily commonly known because of language ambiguity. The resulting uncertainty acts as a correlating device, so inducing correlated equilibrium distributions over outcomes. We believe that our analysis also provides a justification for why subjective beliefs about strategic play may not be consistent.

\bibliographystyle{plainnat}
\bibliography{biblio}

\end{document}